\newtheorem{theorem}{Theorem}
\newtheorem{lemma}{Lemma}
\newtheorem{corollary}{Corollary}
\newtheorem{claim}{Claim}
\newtheorem{proposition}{Proposition}
\newtheorem{problem}{Problem}
\newtheorem{remark}{Remark}
\newenvironment{proof}[1][Proof.]{\begin{trivlist}
\item[\hskip \labelsep {\bfseries #1}]}{\end{trivlist}}
\begin{document}

\title{Measures of edge-uncolorability}

\author{Vahan V. Mkrtchyan\address[MCSD]{Paderborn Institute for Advanced Studies in Computer Science and Engineering,
Paderborn University, Warburger Str. 100, 33098 Paderborn, Germany}%
\thanks{The author is supported by a fellowship from Heinrich Hertz-Stiftung}
\thanks{email: vahanmkrtchyan2002@\{ysu.am, ipia.sci.am,
yahoo.com\}}
                        and
        Eckhard Steffen \addressmark[MCSD]\thanks{email: es@upb.de}}


\maketitle

\begin{abstract} The resistance $r(G)$ of a graph $G$ is the minimum number of edges that have to be removed from $G$ to obtain a graph which
is $\Delta(G)$-edge-colorable.
The paper relates the resistance to other parameters that measure how far is a graph from being $\Delta$-edge-colorable. The first part considers regular graphs
and the relation of the resistance to structural properties in terms of 2-factors. The second part studies general (multi-) graphs $G$. 
Let $r_v(G)$ be the minimum number of vertices that have to be removed from $G$ to obtain a class 1 graph. We show that
$\frac{r(G)}{r_v(G)} \leq \lfloor \frac{\Delta(G)}{2} \rfloor$, and that this bound is best possible.
\end{abstract}

\section{Introduction and definitions}

We consider finite (multi-) graphs $G=(V,E)$ with vertex set $V$ and edge set $E$. The graphs might have multiple 
edges but no loops, and throughout this paper we assume that the graphs under consideration are connected (otherwise we will mention this explicitly). 
Terms and concepts that we do not define can be found in \cite{Lov}. A $k$-edge-coloring of a graph $G=(V,E)$ is a function $\phi$ from the edges of $G$ to a set $\cal{C}$ of $k$ colors, 
such that
adjacent edges receive different colors. The minimum number $k$ such that $G$ has a $k$-edge-coloring is the chromatic index of $G$, and it is denoted by $\chi'(G)$. Clearly, $\Delta(G) \leq \chi'(G)$. If $\chi'(G)=\Delta(G)$, then $G$ is class 1, 
otherwise it is class 2. 

Let $G$ be a (multi-) graph. The {\em resistance} $r(G)$ of $G$ is the minimum number of edges that have to be removed from $G$ 
to obtain a $\Delta(G)$-edge colorable subgraph $H$. In \cite{classIIgraphs} such subgraphs $H$ are called maximum $\Delta(G)$-edge-colorable subgraphs, and it is shown that $\Delta(H) \geq \frac23 \Delta(G)$ for any maximum $\Delta(G)$-edge-colorable subgraph $H$ of $G$, and if $G$ is simple, then $\Delta(H) = \Delta(G)$.

Furthermore, let $r'_{v}$ denote the minimum number of vertices that have to be removed from $G$
to obtain a graph $H$ with $\chi'(H) \leq \Delta(G)$. A modification of this parameter is 
$r_v(G)$ which denotes the minimum number of vertices that have to be removed
from $G$ to obtain a class 1 graph.     

The {\em oddness} $\xi_3(G)$ of a cubic graph $G$ is the minimum number of odd cycles in a 2-factor of $G$, if $G$ possesses a 2-factor, and is infinite, otherwise. 
This notion was explicitly used first by Huck and Kochol in \cite{Huck_Kochol_1995} to prove the 
double-cycle-cover-conjecture (CDCC) for 
cubic graphs with oddness 2. Implicitly it appears already in Jaeger's paper \cite{Jaeger_1988}, when he proved the 5-flow conjecture for graphs with oddness 2. 
This parameter gives us some information about how far the graph is from being class 1. Clearly $G$ is 3-edge-colorable if and only if 
$\xi_3(G) = 0$ and $\xi_3(G)$ is always even. There are some applications of this parameter. For instance, in 
\cite{Huck_2001} the CDCC is proved for cubic graphs with oddness 4, and in \cite{Steffen_2009} it is shown that cyclically $k$-edge connected graphs have a nowhere-zero 5-flow 
if $k \geq \frac{5}{2} \xi_3 - 3$.

This paper generalizes the notion of oddness to regular graphs. Recall that a subgraph of a graph is called spanning, if the vertex set of the subgraph coincides with that of the graph. A subgraph of a graph is called a $k$-factor, if it is spanning and $k$-regular. Petersen \cite{Petersen_1891} showed that every $2n$-regular graph $G$ ($n \geq 0$) has a 2-factorization, that is, a decomposition of $E(G)$ into $n$ 2-factors. 
Let $G$ be a $2n$-regular graph, and ${\cal F}_2^*(G)$ be the set of all 2-factorizations of $G$. Let 
${\cal F} \in {\cal F}_2^*(G)$ then $o({\cal F})$ denotes the number of odd cycles in ${\cal F}$; note that ${\cal F}$ consists of $n$ edge disjoint 2-factors of $G$.

Let $G$ be an $s$-regular graph. We define the {\em oddness} of $G$, to be 

\[ \xi(G) = \min\{o(F)| F \in {\cal F}_2^*(G)\} , \textrm{if } s \textrm{ is even, and }\]

\[ \xi(G) = \min\{\xi(G-F_1)| F_1   \textrm{ is a 1-factor of } G\},  \textrm{if } s \textrm{ is odd, and } $G$ \textrm{ has a 1-factor,}\] 

and let us define it to be infinite, otherwise.

Let $G$ be an $s$-regular graph. A 2-factor of $G$ with odd (only even) cycles is called an odd (even) 2-factor.
Let ${\cal F}$ be a 2-factorization of $G$, then every 2-factor has a 2-coloring that leaves precisely one edge uncolored in each odd cycle. 
Such an edge-coloring is called a {\em canonical} coloring of $G$. (If $s$ is odd one has to add one 1-factor.)

Following \cite{Steffen_2004} the {\em resistance} of an $s$-regular graphs is denoted by $r(G)$.  

In the second section we study regular (multi-) graphs and the relation between the resistance and the oddness. We show that for every $n$ 
there is an $s$-regular graph $G$ such that $r(G) = \xi(G)$, and that these parameters can be arbitrary far apart even for $s$-graphs. The concept of an $s$-graph was introduced by Seymour in \cite{Seymour_1979} as a generalization of that of a bridgeless cubic graph. Thus, our results can be considered as a generalization of the corresponding results obtained in \cite{Steffen_2004}.

In the third section we study general (multi-) graphs $G$. Clearly, $r'_v(G)\leq r_v(G)$, and we show that the difference between these parameters can be arbitrary big. We also show that
$\frac{r(G)}{r_v(G)}, \frac{r(G)}{r'_{v}(G)} \leq \lfloor \frac{\Delta(G)}{2} \rfloor$, and that these bounds are best possible. While in general case $r_v(G)$ and $r'_v(G)$ are different, it can be easily seen that they are the same for the case of cubic graphs, and the last inequalities can be considered as a generalization of the corresponding results for cubic graphs that was first proved in \cite{Steffen_1998}.

\section{Regular graphs: Oddness and resistance}

Throughout the paper we will use the following well-known generalization of the Parity Lemma. For the sake of completeness we add a short proof. 

\begin{lemma} \label{Parity lemma}
 Let $G$ be a graph whose edges are colored with colors $1, \dots , c$, and let $a_i$ be the number of
vertices $v$ in $G$ such that no edge incident to $v$ is colored $i$. Then for all $i = 1, \dots c:$ 
$a_i \equiv |V(G)|\pmod2$.
\end{lemma}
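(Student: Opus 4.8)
The plan is to fix a single color $i$ and to analyze the color class $E_i$, the set of all edges that receive color $i$. The one observation that drives the entire argument is that, because the coloring is a proper edge-coloring (adjacent edges receive different colors), no two edges of $E_i$ can share an endpoint. Hence $E_i$ is a matching in $G$. This remains true for multigraphs, since parallel edges joining the same pair of vertices are adjacent and therefore cannot both be colored $i$.

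Next I would partition $V(G)$ according to incidence with $E_i$. By definition, $a_i$ counts exactly the vertices that are \emph{not} incident to any edge of color $i$, that is, the vertices left uncovered by the matching $E_i$. Consequently the number of vertices covered by $E_i$ is $|V(G)| - a_i$. Because $E_i$ is a matching, its edges are pairwise vertex-disjoint, so the set of covered vertices is partitioned into pairs, one pair per edge of $E_i$; this gives a covered-vertex count of exactly $2|E_i|$, which is even.

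Combining the two counts yields $|V(G)| - a_i = 2|E_i|$, so $|V(G)| - a_i$ is even, which is precisely the asserted congruence $a_i \equiv |V(G)| \pmod 2$. Since the color $i$ was arbitrary, the same conclusion holds for every $i = 1, \dots, c$, completing the proof.

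There is no genuine obstacle in this lemma; the proof is a short double-counting argument. The only point requiring care is the appeal to properness of the coloring, which is what guarantees that each color class is a matching rather than an arbitrary subgraph. Once that is in place, the evenness of $2|E_i|$ is immediate, and the generalization from simple graphs to multigraphs costs nothing.
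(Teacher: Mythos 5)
Your proposal is correct and is essentially the paper's own argument: both identify the color class $E_i$ as a matching and derive $a_i = |V(G)| - 2|E_i|$, from which the congruence is immediate. You simply spell out the justification (properness, multigraph case) that the paper leaves implicit.
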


{\bf Proof.}  For $i = 1, \dots , c$ let $E_i$ be the set of edges which are colored with color $i$. Then $a_i = |V(M)| - 2|E_i|$, and hence 
$a_i \equiv |V(G)|\pmod2$.\hfill\ {\raisebox{0.8ex}{\framebox{}}}\par\bigskip

Clearly, the oddness can  be odd only for graphs of odd order. The first part of this section proves results
which are true independent from the order of the graphs.

\begin{lemma}\label{class1case} For every $s$-regular graph $G$, the following statements are equivalent:
\begin{enumerate}
 \item $r(G) = 0$.
 \item $\xi(G)=0$.
 \item $\chi'(G)=s$.
\end{enumerate}
\end{lemma}

{\bf Proof.}
(1 $\Rightarrow$ 2, 3) Since $G$ is regular, $r(G) = 0$ implies that $G$ is of even order and $\chi'(G)=s$. Every color class is a 1-factor of $G$ and hence $\xi(G)=0$.

(2 $\Rightarrow$ 1)
$\xi(G) = 0$ implies that $E(G)$ can be partitioned into edge-disjoint 1-factors. Hence $\chi'(G)=s$. 

(3 $\Rightarrow$ 1)
$\chi'(G)=s$ means that the graph $G$ is $s$-edge-colorable, hence the minimum number of edges that should be removed from $G$ in order to obtain an $s$-edge-colorable subgraph, is zero, that is, $r(G) = 0$.\hfill\ {\raisebox{0.8ex}{\framebox{}}}\par\bigskip

\begin{lemma} \label{upper_bound} For every $s$-regular graph $G$:  
$r(G)\leq \xi(G)$.
\end{lemma}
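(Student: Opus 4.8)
The plan is to build, for any $s$-regular graph $G$, a proper partial $s$-edge-coloring whose set of uncolored edges has size exactly $\xi(G)$; once such a coloring is in hand, deleting the uncolored edges produces a spanning subgraph that is properly $s$-edge-colored, hence $s=\Delta(G)$-edge-colorable, and by the definition of resistance this forces $r(G)\le\xi(G)$. The tool that furnishes such a coloring is precisely the canonical coloring introduced above, so the whole argument reduces to counting how many edges a canonical coloring leaves uncolored.

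Consider first the case where $s=2n$ is even. By Petersen's theorem $G$ has a 2-factorization, so ${\cal F}_2^*(G)\neq\emptyset$; choose ${\cal F}=\{F_1,\dots,F_n\}$ attaining the minimum, i.e.\ $o({\cal F})=\xi(G)$. Give the $i$-th 2-factor its own pair of colors $\{2i-1,2i\}$, using $2n=s$ colors in total. Within each $F_i$, properly 2-color every even cycle and canonically 2-color every odd cycle, leaving exactly one edge of each odd cycle uncolored. Two edges colored with the same color lie in the same 2-factor and in the same cycle, so no color class contains adjacent edges; the partial coloring is therefore proper. The number of uncolored edges equals the total number of odd cycles over all the $F_i$, namely $o({\cal F})=\xi(G)$. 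Deleting them gives $r(G)\le\xi(G)$.

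Now suppose $s$ is odd. If $G$ has no 1-factor then $\xi(G)=\infty$ and the inequality is vacuous, so assume a 1-factor exists and pick a 1-factor $M$ realizing $\xi(G)=\xi(G-M)$. The graph $G-M$ is $(s-1)$-regular with $s-1$ even, so the previous paragraph supplies a proper partial $(s-1)$-edge-coloring of $G-M$ with exactly $\xi(G-M)$ uncolored edges. Extend it by assigning the fresh color $s$ to every edge of $M$; since $M$ is a 1-factor this is proper and conflicts with nothing already colored. The result is a proper partial $s$-edge-coloring of $G$ with $\xi(G-M)=\xi(G)$ uncolored edges, and deleting them again yields $r(G)\le\xi(G)$.

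There is no serious obstacle here: the statement is essentially a restatement of the canonical coloring's defining property, and the only thing to verify is that the partial coloring stays proper. That verification splits into two trivial checks --- within a single 2-factor the cycle 2-colorings are proper, and distinct 2-factors (together with the extra 1-factor in the odd case) use pairwise disjoint color sets, so no monochromatic adjacency can be created across factors. I would therefore expect the argument to go through with no delicate case analysis.
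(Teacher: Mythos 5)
Your proof is correct and rests on the same idea as the paper's: a canonical coloring of a minimum $2$-factorization (plus one fresh color for a $1$-factor when $s$ is odd) leaves exactly $\xi(G)$ edges uncolored. The paper packages this as an induction on $s$ that peels off one $2$-factor at a time (citing the cubic case separately), whereas you unroll the construction explicitly, which is if anything cleaner and more self-contained.
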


{\bf Proof.} The cases $s=0,1$ are trivial, since $r(G)=0$, and $\xi(G)$ is infinite. For $s\geq 2$ the proof will be given by induction on $s$. The case $s = 2$ is trivial again, while the case $s = 3$ is shown in \cite{Steffen_2004}. Now, choose a $2$-factor $F$ with $k$ odd cycles in a minimum $2$-factorization of $G$, that is, in a $2$-factorization of $G$ that contains $\xi(G)$ odd cycles. Adding $F$ to $G-F$ and extending a $(s-2)$-coloring of $G-F$ with at most $\xi(G)-k$ uncolored edges yields an $s$-coloring of 
$G$ with at most $\xi(G)$ uncolored edges. By induction, we have: $r(G-F)\leq \xi(G-F)=\xi(G)-k$. Therefore $r(G)\leq \xi(G)$.
\hfill\ {\raisebox{0.8ex}{\framebox{}}}\par\bigskip

\begin{lemma} \label{one_odd_2factor}
If there is a $2$-factor containing all $\xi(G)$ odd cycles of a $2$-factorization of an $s$-regular graph, 
then $\chi'(G)\leq s+1$.
\end{lemma}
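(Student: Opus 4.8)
The plan is to produce an explicit $(s+1)$-edge-coloring of $G$ under the hypothesis. Let $\mathcal{F} = \{F_1, F_2, \dots, F_n\}$ be the $2$-factorization realizing $\xi(G)$ (adding a $1$-factor first if $s$ is odd), and suppose without loss of generality that $F_1$ is the $2$-factor containing all $\xi(G)$ odd cycles, so that every other $2$-factor $F_2, \dots, F_n$ consists entirely of even cycles. The idea is that the even $2$-factors cost us nothing, while the single troublesome $2$-factor $F_1$ can be absorbed into just one extra color.

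First I would color the even $2$-factors. Each $F_i$ ($i \geq 2$) is a disjoint union of even cycles, so it is $2$-edge-colorable; using a fresh pair of colors for each such factor produces a proper coloring of $F_2 \cup \dots \cup F_n$ with $2(n-1) = s-2$ colors (or $s-3$ plus one color for the added $1$-factor in the odd case). Next I would handle $F_1$: traverse each of its cycles and color the edges alternately with two new colors, say $c_1$ and $c_2$. On an even cycle this is a proper $2$-coloring; on an odd cycle the alternation fails at exactly one vertex, where two consecutive edges both want color $c_1$ (or $c_2$). At each such conflict I would recolor one of the two offending edges with a single additional color $c_3$. The key observation is that a single new color $c_3$ suffices for all the odd cycles simultaneously: because the odd cycles of $F_1$ are vertex-disjoint, the recolored edges lie in disjoint cycles and hence are pairwise non-adjacent, so assigning them all the color $c_3$ remains proper.

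Counting colors, I use $s-2$ colors on the even factors (using the canonical coloring language from the excerpt), two colors $c_1, c_2$ on $F_1$, and the one shared repair color $c_3$, for a total of $s+1$. I would then verify propriety across the boundary: the color sets used on $F_1$ (namely $c_1, c_2, c_3$) are disjoint from those used on $F_2 \cup \dots \cup F_n$, so no conflicts arise between edges of different factors, and within each factor the coloring is proper by construction. This establishes $\chi'(G) \leq s+1$.

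The main obstacle I expect is the bookkeeping in the odd-$s$ case, where one must first remove a $1$-factor $F_1^{\ast}$, apply the even/odd argument to the remaining $2$-regular structure, and then reintroduce $F_1^{\ast}$ as its own color class without disturbing propriety; one must check that the $1$-factor's edges, being a perfect matching, can always take a single dedicated color and that the total count still lands at $s+1$ rather than $s+2$. The cleaner route is simply to invoke the canonical coloring directly: the hypothesis says all $\xi(G)$ uncolored edges of the canonical coloring lie in a single $2$-factor and hence in vertex-disjoint odd cycles, so they form a matching, and thus one extra color recolors all of them at once.
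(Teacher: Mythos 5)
Your proof is correct and is essentially the paper's argument: color the all-even $2$-factors with two fresh colors each and the single odd-cycle-carrying $2$-factor with three colors, giving $2(|\mathcal{F}|-1)+3=s+1$ (plus the dedicated color for the $1$-factor when $s$ is odd). The paper simply asserts that a disjoint union of cycles is $3$-edge-colorable where you spell out the shared repair color, so there is no substantive difference.
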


{\bf Proof.} Suppose there is 2-factor $F$ containing all $\xi(G)$ odd cycles of a 2-factorization of $G$. Then the edges of the 2-factor $F$ can be colored with three colors and the edges of each of the remaining 2-factors in $\cal{F}$ can be colored with two (fresh) colors. Hence $\chi'(G)\leq3+2(|\cal F|$ $-1)=s+1$. 

\hfill\ {\raisebox{0.8ex}{\framebox{}}}\par\bigskip

\subsection{Regular graphs of odd order}

\vspace{.2cm}

This section considers $s$-regular graphs of odd order, i.e. $s$ is even.

\begin{proposition} \label{Prop_odd_1}
 Let $G$ be an $s$-regular graph of odd order, then $r(G) \geq \frac{s}{2}$  and $\xi(G) \geq \frac{s}{2}$.
\end{proposition}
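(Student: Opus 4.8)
The statement splits into two independent inequalities, and I would prove each separately; note first that since $G$ is $s$-regular of odd order, $s\cdot|V(G)|=2|E(G)|$ forces $s$ to be even, so $\xi(G)$ is governed by the even-degree case of its definition and $n:=s/2$ is an integer.

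For the oddness bound I would argue directly from the definition. Since $s$ is even, $\xi(G)=\min\{o(\mathcal F)\mid \mathcal F\in\mathcal F_2^*(G)\}$, and every $\mathcal F\in\mathcal F_2^*(G)$ decomposes $E(G)$ into exactly $n=s/2$ edge-disjoint 2-factors. The key observation is that each 2-factor is a disjoint union of cycles spanning $V(G)$, so the lengths of its cycles sum to $|V(G)|$. As $|V(G)|$ is odd, these lengths cannot all be even, hence every single 2-factor contains at least one odd cycle. Summing over the $n$ 2-factors in any fixed $\mathcal F$ gives $o(\mathcal F)\geq n=s/2$, and minimizing over $\mathcal F$ yields $\xi(G)\geq s/2$.

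For the resistance bound I would use a double-counting argument powered by the Parity Lemma. Let $R$ be a minimum edge set whose removal leaves $H=G-R$ properly $s$-edge-colorable, so $|R|=r(G)$, fix a proper $s$-edge-coloring of $H$ with colors $1,\dots,s$, and note $V(H)=V(G)$ is odd. I would count the pairs $(v,i)$ for which no edge of $H$ at $v$ is colored $i$. By Lemma~\ref{Parity lemma} applied to $H$, each color $i$ is missing at $a_i\equiv|V(G)|\equiv 1\pmod 2$ vertices, so $a_i\geq 1$ and $\sum_{i=1}^s a_i\geq s$. On the other hand, at each vertex $v$ the $\deg_H(v)$ colored edges carry distinct colors, so the number of colors missing at $v$ equals $s-\deg_H(v)$, which is exactly the number $d_R(v)$ of edges of $R$ incident to $v$; summing over $v$ and counting each edge of $R$ at both endpoints gives $\sum_{i=1}^s a_i=\sum_v d_R(v)=2\,r(G)$. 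Combining the two estimates yields $2\,r(G)\geq s$, i.e. $r(G)\geq s/2$.

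I do not expect a serious obstacle in either half; the one point needing care is the double-counting identity $\sum_i a_i=2\,r(G)$, where one must verify that at an $s$-regular vertex the number of missing colors is precisely the number of incident removed edges, which relies on the coloring of $H$ being proper. The essential leverage comes from the Parity Lemma, which upgrades the trivial bound $a_i\geq 0$ to $a_i\geq 1$ via the odd order of $G$; this is exactly what produces the factor $s/2$ rather than $0$.
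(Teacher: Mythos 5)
Your proof is correct. The two halves compare differently with the paper's argument. For $r(G)\geq s/2$ the paper simply counts: with $|V(G)|=2n+1$ each color class is a matching with at most $n$ edges, so at most $sn$ of the $sn+s/2$ edges can be colored. Your double count $\sum_i a_i = 2r(G)$ together with $a_i\geq 1$ (from the Parity Lemma) is the same estimate in different clothing --- the identity $a_i=|V(G)|-2|E_i|$ from the Parity Lemma's proof is exactly the statement that a matching on an odd number of vertices misses a vertex --- so this half is essentially the paper's argument, just organized around missing-color incidences rather than edge counts. For $\xi(G)\geq s/2$ the routes genuinely diverge: the paper deduces it from $r(G)\leq\xi(G)$ (its Lemma~\ref{upper_bound}, which is proved by induction on $s$), whereas you argue directly that every $2$-factor of an odd-order graph must contain an odd cycle because the cycle lengths sum to $|V(G)|$, and there are $s/2$ factors in any $2$-factorization. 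Your version is more elementary and self-contained (it does not invoke the inductive lemma), and it also immediately shows that the two inequalities are independent of one another; the paper's version is shorter on the page because the lemma is already available. One small point you handle implicitly and correctly: if $G$ had no $2$-factorization the oddness would be infinite, but by Petersen's theorem this case does not arise for even $s$.
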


{\bf Proof.} Let $2n+1$ ($n \geq 1$) be the order of $G$. Then $s$ is even, say $s=2k$ ($k \geq 1$), and $|E(G)| = 2nk + k$, and every color class has at most $n$ edges. Thus 
$r(G) \geq \frac{s}{2}$. Lemma \ref{upper_bound} implies that $\xi(G) \geq \frac{s}{2}$.
\hfill\ {\raisebox{0.8ex}{\framebox{}}}\par\bigskip

\begin{proposition}\label{s2proposition}
  Let $G$ be an $s$-regular graph of odd order. Then the following statements are equivalent:
\begin{enumerate}
 \item $r(G) = \frac{s}{2}$.
 \item $\xi(G) = \frac{s}{2}$.
 \item The graph $G$ has a 2-factorization $\cal F$ in which every 2-factor $F$ contains precisely one odd cycle.
\end{enumerate}
\end{proposition}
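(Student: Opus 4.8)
The plan is to prove the equivalence by the cyclic chain $(1)\Rightarrow(3)\Rightarrow(2)\Rightarrow(1)$. Throughout I would write $s=2k$ (so $k\ge 1$) and let the order of $G$ be $2n+1$. Two facts will be used repeatedly: by Proposition~\ref{Prop_odd_1} we have $r(G)\ge\frac{s}{2}=k$ and $\xi(G)\ge k$, and by Lemma~\ref{upper_bound} we have $r(G)\le\xi(G)$. I would also record at the outset the parity observation that, since a $2$-factor covers all $2n+1$ vertices and its cycle lengths sum to the odd number $2n+1$, every $2$-factor of $G$ contains an odd number of odd cycles, in particular at least one.

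The two implications $(3)\Rightarrow(2)$ and $(2)\Rightarrow(1)$ are then immediate. For $(3)\Rightarrow(2)$, a $2$-factorization $\mathcal F$ as in $(3)$ consists of exactly $k$ two-factors, each carrying precisely one odd cycle, so $o(\mathcal F)=k$ and hence $\xi(G)\le k$; combined with $\xi(G)\ge k$ this gives $\xi(G)=\frac{s}{2}$. For $(2)\Rightarrow(1)$, the inequality $r(G)\le\xi(G)=k$ from Lemma~\ref{upper_bound} together with $r(G)\ge k$ yields $r(G)=\frac{s}{2}$.

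The substance of the proposition lies in $(1)\Rightarrow(3)$. Assume $r(G)=k$ and fix an $s$-edge-coloring of $G$ leaving exactly $k$ edges uncolored, writing $M_i$ for the color class of color $i$. Since $|E(G)|=(2n+1)k$, the colored edges number $2nk$; as each of the $2k$ classes $M_i$ is a matching of size at most $n$ in a graph on $2n+1$ vertices, every $M_i$ must be a near-perfect matching of size exactly $n$, missing a single vertex. Counting edges at a vertex $v$ of degree $2k$ then shows that the number of colors missing at $v$ equals the number of uncolored edges incident to $v$. I would therefore, at each vertex $v$, fix an arbitrary bijection between the colors missing at $v$ and the uncolored edges at $v$. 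Because each color misses exactly one vertex, this assigns every color to exactly one uncolored edge; reading off, for each uncolored edge $e=vw$, the color $i$ assigned to $e$ at $v$ and the color $j$ assigned to $e$ at $w$ produces $k$ color pairs that partition the $2k$ colors, with $M_i$ missing $v$ and $M_j$ missing $w$.

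Finally I would verify that each block $M_i\cup M_j\cup\{e\}$ is a $2$-factor with exactly one odd cycle. In $M_i\cup M_j$ the vertices $v,w$ have degree $1$ and all others degree $2$, so this union is a disjoint collection of (even) alternating cycles together with a single alternating path joining $v$ and $w$; since the path begins with an $M_j$-edge and ends with an $M_i$-edge, it has even length. Adding the uncolored edge $e=vw$ restores $2$-regularity and closes the path into one odd cycle, leaving the remaining cycles even, so the block carries precisely one odd cycle. As the $M_i$ exhaust the colored edges and each uncolored edge is used once, the $k$ blocks form the required $2$-factorization, establishing $(3)$. The main obstacle is exactly this construction: upgrading the static data ``$2k$ near-perfect matchings plus $k$ uncolored edges'' into a $2$-factorization of the prescribed shape. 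The delicate points are confirming that the vertex-wise bijections are globally consistent, so that each matching and each uncolored edge is used exactly once, and that closing the even $v$–$w$ path with $e$ creates one and only one odd cycle.
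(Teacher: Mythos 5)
Your proof is correct and follows essentially the same route as the paper's: the easy implications rest on Proposition~\ref{Prop_odd_1} and Lemma~\ref{upper_bound}, and the substantive step is the same construction that pairs each uncolored edge $e=vw$ with the two color classes missing at $v$ and $w$ so that their union with $e$ is a $2$-factor containing exactly one odd cycle. Your handling of the vertex-wise bijection between missing colors and uncolored edges is in fact a little more careful than the paper's, which simply relabels so that color $i$ is missing at $v_i$, but the idea is identical.
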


{\bf Proof.} Let $G$ be of order $2n+1$ ($n \geq 1$) and $s = 2k$ ($k \geq 1$).

(1 $\Rightarrow$ 2)  Let $c$ be a minimum $2k$-coloring of $G$ with colors $1,2, \dots , 2k$. Let
$e_1 = v_1v_2, \dots e_k=v_{2k-1}v_{2k}$ be the uncolored edges. Since $G$ is of odd order, it follows from  Lemma \ref{Parity lemma} that every color is missing at precisely one vertex, say color $i$ is missing at vertex $v_i$. Then
for every $1\leq i \leq k$, the union of $c^{-1}(2i-1)$, $c^{-1}(2i)$ and edge $e_i$ forms a 2-factor of $G$ with precisely 
one odd cycle. Therefore $\xi(G) \leq \frac{s}{2}$, and by Proposition \ref{Prop_odd_1} it follows that $\xi(G) = \frac{s}{2}$.

(2 $\Rightarrow$ 3) Since every 2-factorization has $k$ 2-factors, and each of them has at least one odd cycle, it follows that 
each 2-factor has precisely one odd cycle. 

(3 $\Rightarrow$ 1) 
This 2-factorization allows an $2k$-coloring of $G$ that leaves $k$ edges uncolored. Now the statement follows with 
Proposition \ref{Prop_odd_1}.
\hfill\ {\raisebox{0.8ex}{\framebox{}}}\par\bigskip

By Petersen's Theorem, every complete graph of odd order has a 2-factorization, and so proposition \ref{s2proposition} implies the following corollary:

\begin{corollary}
 For the complete graphs of odd order: $r_{2k}(K_{2k+1}) = \xi_{2k}(K_{2k+1}) = k$, $k \geq 1$.
\end{corollary}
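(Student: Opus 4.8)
The plan is to apply Proposition \ref{s2proposition} to $K_{2k+1}$, which is $2k$-regular of odd order, so that $s = 2k$ and $\frac{s}{2} = k$. By the equivalence established there it suffices to verify any one of the three conditions, and I would aim for condition (3): that $K_{2k+1}$ admits a 2-factorization in which every 2-factor contains precisely one odd cycle. Once condition (3) is in hand, the equivalence immediately yields $r(K_{2k+1}) = \xi(K_{2k+1}) = k$.

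First I would record the lower bound, which is essentially free. Since $K_{2k+1}$ is $2k$-regular of odd order, Proposition \ref{Prop_odd_1} already gives $r(K_{2k+1}) \geq k$ and $\xi(K_{2k+1}) \geq k$. This can also be seen directly: by Petersen's Theorem any 2-factorization $\mathcal F$ of $K_{2k+1}$ consists of $k$ edge-disjoint 2-factors, and since each 2-factor spans the odd number $2k+1$ of vertices, a parity count on cycle lengths forces each 2-factor to contain at least one odd cycle. Hence $o(\mathcal F) \geq k$ for every $\mathcal F$, and therefore $\xi(K_{2k+1}) \geq k$.

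For the upper bound, the substantive ingredient is the classical Hamiltonian decomposition of the complete graph of odd order (Walecki): $K_{2k+1}$ decomposes into $k$ edge-disjoint Hamiltonian cycles. Each such Hamiltonian cycle is a 2-factor consisting of a single cycle on $2k+1$ vertices, i.e.\ a single odd cycle. This decomposition is therefore a 2-factorization realizing condition (3) of Proposition \ref{s2proposition}, with exactly one odd cycle in each of its $k$ 2-factors. Combining this with the lower bound of the previous paragraph gives $r(K_{2k+1}) = \xi(K_{2k+1}) = k$.

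The main obstacle is precisely this external input, the existence of a Hamiltonian decomposition of $K_{2k+1}$; everything else reduces to routine invocations of Proposition \ref{s2proposition} and Proposition \ref{Prop_odd_1}. One could avoid citing the decomposition as a black box by constructing it directly via the standard rotation scheme on $\mathbb{Z}_{2k}$ together with one fixed central vertex, but verifying that this scheme yields $k$ pairwise edge-disjoint Hamiltonian cycles is exactly the content of that classical result, so I would simply cite it.
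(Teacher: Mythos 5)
Your proof is correct, and it follows the route the paper intends: verify condition (3) of Proposition \ref{s2proposition} and combine with the lower bound of Proposition \ref{Prop_odd_1}. The noteworthy difference is that you are more careful than the paper at the one step that actually matters. The paper's justification is a single sentence citing Petersen's Theorem, i.e.\ the mere existence of \emph{some} 2-factorization of $K_{2k+1}$; but condition (3) requires a 2-factorization in which \emph{every} 2-factor contains precisely one odd cycle, and a generic 2-factorization of $K_{2k+1}$ need not have this property (e.g.\ a 2-factor of $K_9$ consisting of three triangles contributes three odd cycles). You correctly identify that the missing ingredient is Walecki's Hamiltonian decomposition of $K_{2k+1}$ into $k$ Hamiltonian cycles, each a single odd cycle of length $2k+1$, which does realize condition (3); your parity observation that every 2-factor of an odd-order graph must contain at least one odd cycle also gives a self-contained lower bound. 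As a minor aside, one could avoid Walecki entirely for the numerical conclusion: since $\chi'(K_{2k+1})=2k+1$, a proper $(2k+1)$-edge-coloring has all color classes of size $k$, so deleting one color class shows $r(K_{2k+1})\leq k$ directly, and then Proposition \ref{s2proposition} yields $\xi(K_{2k+1})=k$; but your route via condition (3) is the more natural one and matches the paper's intent.
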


\subsection{Regular graphs of even order and $s$-graphs}

\vspace{.2cm}

\begin{lemma} \label{not1_and_basic2}
Let $G$  be an $s$-regular graph of even order, then:

\begin{enumerate}
 \item $r(G) \not = 1$, and 
 \item $r(G) = 2$ if and only if $\xi(G) = 2$. 
\end{enumerate}
\end{lemma}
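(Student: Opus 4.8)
The plan is to prove the two assertions separately, using the Parity Lemma (Lemma \ref{Parity lemma}) together with Lemmas \ref{class1case} and \ref{upper_bound}. For the first statement I would argue by contradiction: suppose $r(G)=1$, so there is an $s$-edge-coloring of $G$ leaving exactly one edge $e=uv$ uncolored. Every vertex other than $u,v$ has all $s$ incident edges colored and hence misses no color, while $u$ and $v$ each miss exactly one color. Since $G$ has even order, the Parity Lemma forces every $a_i$ to be even; as the only contributions to the $a_i$ come from $u$ and $v$, the color missing at $u$ must equal the color missing at $v$ (otherwise two colors would have $a_i=1$). But then $e$ can be colored with this common missing color, giving a proper $s$-edge-coloring and $r(G)=0$, a contradiction. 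Hence $r(G)\neq 1$.

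For the second statement, the direction $\xi(G)=2\Rightarrow r(G)=2$ is immediate: Lemma \ref{upper_bound} gives $r(G)\le 2$, Lemma \ref{class1case} gives $r(G)\neq 0$ (as $\xi(G)\neq 0$), and the first statement gives $r(G)\neq 1$, so $r(G)=2$. The converse is the substantial part. Assuming $r(G)=2$, Lemma \ref{upper_bound} already yields $\xi(G)\ge 2$, so it remains to exhibit a $2$-factorization with exactly two odd cycles. I would fix an $s$-edge-coloring $c$ leaving exactly two uncolored edges $e_1,e_2$ and first pin down their configuration. Here $\sum_i a_i=4$, and since $G$ has even order every $a_i$ is even, so at most two colors are missing, with total multiplicity $4$; minimality of $r(G)$ means no uncolored edge can be colored, i.e.\ the two endpoints of each uncolored edge miss disjoint color sets. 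A short case analysis on whether $e_1,e_2$ share a vertex or are parallel shows that both cases force some color to have an odd $a_i$ (violating parity) or allow an uncolored edge to be recolored (violating minimality). Hence $e_1,e_2$ are vertex-disjoint, exactly two colors $\alpha,\beta$ are missing (each at two vertices), and each uncolored edge joins a vertex missing $\alpha$ to a vertex missing $\beta$.

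Given this, I would set $H=c^{-1}(\alpha)\cup c^{-1}(\beta)\cup\{e_1,e_2\}$ and verify it is a $2$-factor: every vertex not meeting $e_1,e_2$ carries both colors $\alpha,\beta$ and has degree $2$, while each endpoint of an uncolored edge misses exactly one of $\alpha,\beta$ and regains degree $2$ via the uncolored edge. The remaining color classes are perfect matchings; for even $s$ I pair them into even $2$-factors, obtaining a $2$-factorization of $G$, and for odd $s$ I first delete one such matching as the $1$-factor $F_1$ and pair the rest, obtaining a $2$-factorization of $G-F_1$. In either case all $2$-factors other than $H$ are unions of even (alternating) cycles, so the total oddness equals $o(H)$. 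Now $c^{-1}(\alpha)\cup c^{-1}(\beta)$ is a disjoint union of even cycles and exactly two paths (its four degree-one vertices being the endpoints of $e_1,e_2$), so closing these paths with $e_1,e_2$ creates at most two new cycles; since $H$ spans the even-order graph, $o(H)$ is even, whence $o(H)\in\{0,2\}$. If $o(H)=0$ the factorization were odd-cycle-free, forcing $r(G)=0$ by Lemma \ref{class1case}, a contradiction; therefore $o(H)=2$, giving $\xi(G)\le 2$ and hence $\xi(G)=2$.

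I expect the delicate steps to be the structural classification of $e_1,e_2$ via the parity-plus-minimality case analysis, and the verification that the constructed $H$ is a genuine $2$-factor contributing at most two odd cycles; the odd-$s$ reduction (removing a full color class with $a_i=0$, which exists because at most two colors are missing among $s\ge 3$) and the appeals to Lemmas \ref{class1case}, \ref{upper_bound} and to the first statement should be routine.
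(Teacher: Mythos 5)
Your proof is correct and follows essentially the same route as the paper: part (1) via the Parity Lemma, the forward direction of (2) from part (1) together with Lemmas \ref{class1case} and \ref{upper_bound}, and the converse by showing the two uncolored edges are disjoint and that $c^{-1}(\alpha)\cup c^{-1}(\beta)\cup\{e_1,e_2\}$ is a $2$-factor while the remaining color classes pair into even $2$-factors. Your extra step ruling out $o(H)=0$ by parity-plus-Lemma \ref{class1case} is a welcome justification of a point the paper simply asserts (``precisely two odd cycles'').
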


{\bf Proof.}
1. Assume that there is an $s$-coloring of $G$, which leaves precisely one edge $e=vw$ uncolored, i.e. $v$ and $w$ are the only vertices where colors are missing, 
say color $\alpha $ at $v$ and $\beta$ at $w$. 
Lemma \ref{Parity lemma} implies $\alpha = \beta$, and hence $G$ is $s$-colorable, i.e. $r(G) =0$, a contradiction.

2. Suppose that $\xi(G) = 2$. Then by (1) and Lemma \ref{upper_bound}, $r(G)\in \{0,2\}$. If $r(G)=0$, then $\xi(G) =0$ by Lemma \ref{class1case}. Thus $r(G)=2$.

Suppose $r(G) = 2$. Then, by Lemma \ref{upper_bound}, $\xi(G) \geq 2$. Let $c$ denote an $s$-coloring of $G$ that leaves exactly two edges uncolored, and let $e=vw$ and $f=xy$ denote the two edges of $G$ which are not colored by $c$. If $e$ and $f$ are adjacent, say $v=x$, then two different colors $\alpha$ and $\beta$ are missing at $v$. Lemma \ref{Parity lemma}
implies that, say $\alpha$, is missing at $w$. Thus $r(G) < 2$, contradicting the choice of $c$. 

Thus we can assume that $e$ and $f$ are not adjacent. Let $\alpha$ and $\beta$ be the colors missing at vertices $v$ and $w$, respectively. The choice of $c$ implies that $\alpha \not = \beta$. 
It follows with Lemma \ref{Parity lemma} that $\alpha$ and $\beta$ are missing at precisely one vertex of $f$. Therefore $c^{-1}(\alpha)$ and $c^{-1}(\beta)$ together with $e$ and $f$ induce a 2-factor of $G$ with precisely two odd cycles. All other color classes are 1-factors and therefore  $\xi(G) = 2$.
\hfill\ {\raisebox{0.8ex}{\framebox{}}}\par\bigskip

An $s$-graph is an $s$-regular graph $G$ such that $|\partial_G(X)| \geq s$ for every odd set $X \subseteq V(G)$, where $\partial_G(X)$ is the set of edges of $G$ connecting $X$ to $V(G)\backslash X$.

The following proposition will be used frequently:

\begin{proposition}\label{sgraphprop}Let $G$ be an $s$-graph ($s\geq 1$). Then:
\begin{enumerate}
 \item the graph $G$ has an even number of vertices;
 \item the oddness $\xi(G)$ is even;
 \item if $G$ is connected, then it is $2$-connected, unless $G$ is the trivial graph $K_2$. 
\end{enumerate}
\end{proposition}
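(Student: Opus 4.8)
The plan is to prove the three assertions about $s$-graphs in order, using standard facts about edge cuts in regular graphs.

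The plan is to treat the three parts separately, each time exploiting the defining cut condition $|\partial_G(X)| \geq s$ for odd sets $X$.

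For part (1) I would argue by contradiction. If $|V(G)|$ were odd, then $X = V(G)$ is itself an odd set, yet there are no edges leaving $V(G)$, so $|\partial_G(V(G))| = 0 < s$ (recall $s \geq 1$). This violates the $s$-graph property, so $|V(G)|$ must be even.

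For part (2) the key observation is that in any $2$-factor of a graph on an even number of vertices the number of odd cycles is even: the cycles partition the vertex set, so their lengths sum to $|V(G)|$, and the number of odd-length cycles is congruent to this sum modulo $2$. When $s$ is even, Petersen's theorem gives a $2$-factorization into $\frac{s}{2}$ $2$-factors; by part (1) and the observation each $2$-factor contains an even number of odd cycles, so $o(\mathcal{F})$ is even for every factorization, and since at least one factorization exists, $\xi(G)$ is finite and even. When $s$ is odd I would first secure a $1$-factor $F_1$: assuming none exists, Tutte's theorem supplies a set $S$ with $o(G-S) > |S|$; each odd component $C$ of $G-S$ is an odd vertex set, and since $C$ is a full component, all of its at least $s$ boundary edges (by the $s$-graph condition) run into $S$. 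Hence the number of edges between $S$ and the odd components is at least $s\cdot o(G-S)$, while it is at most $s|S|$ because $S$ has total degree $s|S|$; this forces $o(G-S) \leq |S|$, a contradiction. Given such an $F_1$, the graph $G - F_1$ is $(s-1)$-regular of even order (and the even-order argument needs only even-regularity and even order, not the $s$-graph property), so the even case applies and $\xi(G) = \min_{F_1}\xi(G-F_1)$ is even.

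For part (3), reading ``$2$-connected'' as bridgeless (the notion matching the cut condition and Seymour's motivation, and the only reading consistent with the exception being exactly $K_2$), I would again argue by contradiction. Suppose the connected graph $G$ has a bridge and let $X$ be one side of the resulting cut, so $|\partial_G(X)| = 1$. Summing degrees over $X$ gives $s|X| = 2e(X) + 1$, where $e(X)$ is the number of edges inside $X$; the right-hand side is odd, so both $s$ and $|X|$ are odd. Then $X$ is an odd set, and the $s$-graph condition yields $s \leq |\partial_G(X)| = 1$, forcing $s = 1$. A connected $1$-regular graph is $K_2$, which is the stated exception, while for every $s \geq 2$ no bridge can exist. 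The step I expect to demand the most care is the existence of a $1$-factor for odd $s$ in part (2): one must verify that Tutte's obstruction is genuinely incompatible with the cut condition, being attentive to the edge-counting between $S$ and the odd components and to the degenerate case $S = \emptyset$.
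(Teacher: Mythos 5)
The paper states this proposition without any proof at all --- it is introduced with ``The following proposition will be used frequently'' and then simply asserted --- so there is no in-paper argument to compare yours against; your proposal has to stand on its own, and it essentially does. Part (1) is exactly the right one-line argument ($X=V(G)$ would be an odd set with empty boundary). Part (2) is correct and complete: the parity observation that a $2$-factor of an even-order graph has an even number of odd cycles, combined with Petersen's theorem for even $s$, and for odd $s$ the Tutte-condition computation ($s\cdot o(G-S)\leq |\partial_G(S)|\leq s|S|$) correctly secures a $1$-factor, which is genuinely needed here since otherwise $\xi(G)$ is infinite by definition; your handling of $S=\emptyset$ is also fine.

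The only point to flag is in part (3), where you prove $2$-\emph{edge}-connectivity (bridgelessness) rather than $2$-\emph{vertex}-connectivity. Your justification for that reading is defensible, and bridgelessness is in fact all that the paper ever uses (in Lemma \ref{s_graph_construction} the appeal to ``$2$-connected'' is really a cut-parity step pushing $|\partial_G(X_j)|\geq s-1$ up to $s$). But if ``$2$-connected'' is meant in the standard vertex sense, your argument needs one more short step, and it is available from the same cut condition: if $v$ were a cut vertex and $C$ a component of $G-v$, then $\partial_G(C)\subseteq \partial_G(v)$ while $v$ also sends at least one edge to another component, so $|\partial_G(C)|\leq s-1$; this rules out any odd component $C$, and if every component of $G-v$ is even then $C\cup\{v\}$ is an odd set whose boundary again has size $s-d\leq s-1$, where $d\geq 1$ is the number of edges from $v$ into $C$. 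Either way the $s$-graph condition is violated. With that supplement the proof covers both readings; as written it is correct for the bridgeless one.
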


For $i= 0, \dots ,n$ let $G_i$ be graphs, $v_{i}w_{i} = e_i \in E(G_i)$, and let
$G = \sum_{i=0}^{n} G_i$ be the graph obtained from $G_i-e_i$ by adding edges as follows:
If $n=2k$, then add edges $v_{2j}v_{2j+1}$, $w_{2j+1}w_{2j+2}$ ($0 \leq j \leq k-1$) and $v_{2k}w_0$, and 
if $n=2k-1$, then add edges $v_{2j}v_{2j+1}$, $w_{2j+1}w_{2j+2}$ ($0 \leq j \leq k-1$)
where the indices are added modulo $2k$. 

\begin{lemma} \label{s_graph_construction}
For $i= 1, \dots ,n$, let $G_i$ be $s$-graphs, then $G = \sum_{i=1}^n G_i$ is an $s$-graph.
\end{lemma}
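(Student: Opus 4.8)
The plan is to verify the two defining properties of an $s$-graph for $G$: that $G$ is $s$-regular, and that $|\partial_G(X)| \geq s$ for every odd set $X \subseteq V(G)$. Regularity is a bookkeeping check. Writing $V(G)$ as the disjoint union of the sets $V(G_i)$, every vertex other than the terminals $v_i, w_i$ retains its degree $s$ from $G_i$, while each $v_i$ and each $w_i$ loses the edge $e_i$, dropping to degree $s-1$. One then reads off from the two index families $v_{2j}v_{2j+1}$, $w_{2j+1}w_{2j+2}$ together with $v_{2k}w_0$ (respectively the $n=2k-1$ list) that every terminal appears in exactly one added connector edge, so all degrees return to $s$ and $G$ is $s$-regular.

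For the cut condition I would first record the parity observation that in any $s$-regular graph $|\partial_G(X)| = s|X| - 2e_G(X)$, where $e_G(X)$ is the number of edges inside $X$; hence for odd $X$ we have $|\partial_G(X)| \equiv s \pmod 2$. Since $s-1 \not\equiv s \pmod 2$, it therefore suffices to prove the weaker bound $|\partial_G(X)| \geq s-1$ for every odd $X$, and parity automatically upgrades this to $\geq s$.

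To get the bound $\geq s-1$, fix an odd set $X$ and put $X_i = X \cap V(G_i)$. Because $|X| = \sum_i |X_i|$ is odd, at least one index $i_0$ has $|X_{i_0}|$ odd. By Proposition \ref{sgraphprop}(1) each $V(G_{i_0})$ has even order, so $X_{i_0}$ is a proper nonempty odd subset of $V(G_{i_0})$, and the $s$-graph property of $G_{i_0}$ yields $|\partial_{G_{i_0}}(X_{i_0})| \geq s$. All of these edges are internal to $G_{i_0}$, and at most one of them, namely $e_{i_0}$, was deleted in forming $G$; each surviving one is an edge of $G$ separating $X$ from its complement. Hence $|\partial_G(X)| \geq |\partial_{G_{i_0}}(X_{i_0})| - 1 \geq s-1$, and the parity step completes the proof.

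The argument is short precisely because the connector edges and the remaining blocks $G_i$ only add to the cut, so I never need to analyze them: it is enough that one single block is genuinely split into an odd part. The only points requiring care are the degree bookkeeping for regularity and the degenerate cases ($s \leq 1$, or a block equal to $K_2$, where Proposition \ref{sgraphprop}(3) warns that the sum may fail to be $2$-connected); I expect these to be dispatched directly rather than to present a real obstacle. The essential device is the parity boost from $s-1$ to $s$, which is what allows me to ignore the precise effect of deleting the edges $e_i$.
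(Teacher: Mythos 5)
Your proof is correct, and its skeleton is the paper's: split $X$ into the traces $X_i = X \cap V(G_i)$, find an index $i_0$ with $|X_{i_0}|$ odd, use the cut condition of the $s$-graph $G_{i_0}$ to produce $s$ edges, and note that at most one of them (namely $e_{i_0}$) is destroyed in forming $G$, giving $|\partial_G(X)| \geq s-1$. Where you genuinely diverge is in recovering the last edge. The paper first reduces to $G[X]$ connected, then distinguishes the case $X = X_{j}$ (where it argues the connector edges compensate directly for the possible loss of $e_j$) from the case $X \neq X_j$ (where it invokes the $2$-connectivity of $G$ from Proposition \ref{sgraphprop}). You replace all of this with the parity identity $|\partial_G(X)| = s|X| - 2e_G(X) \equiv s \pmod 2$ for odd $X$ in a loopless $s$-regular multigraph, so the bound $\geq s-1$ upgrades itself to $\geq s$ automatically. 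This buys a shorter argument with no case distinction, no reduction to connected $G[X]$, and no appeal to $2$-connectivity (which in the paper's version carries its own caveat about $K_2$); the only prerequisite is that $G$ itself is $s$-regular and loopless, which your degree bookkeeping for the connector edges already establishes. Both routes are sound; yours is the more self-contained of the two.
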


{\bf Proof.}
× $G$ is $s$-regular and of even order. We have to show that $|\partial_G(X)| \geq s$
for any odd set $X \subseteq V(G)$. We may assume that $G[X]$ is connected, and let
$X_i = X \cap V(G_i)$ ($i=1, \dots ,n$). Since $|X|$ is odd it follows that there is 
$1 \leq j \leq n$ such that $|X_j|$ is odd. 

If $X = X_j$ then - independent from whether $v_j, w_j \in X$ - it follows that
$|\partial_G(X)| \geq |\partial_{G_j}(X)| \geq s$.

In any other case we have that $|\partial_G(X_j)| \geq |\partial_{G_j}(X_j)| - 1 \geq s-1$. Since $G$
is 2-connected it follows that $|\partial_G(X_j)| \geq s$.
\hfill\ {\raisebox{0.8ex}{\framebox{}}}\par\bigskip

\begin{lemma} \label{basic_module}
For each positive integer $k$, there is an $s$-graph $O_k$ ($s\geq 3$)
such that $r(O_k) = 2k+1 = \xi(O_k)-1$.
\end{lemma}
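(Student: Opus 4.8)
The plan is to realize $O_k$ as a \emph{cyclic sum} of an odd number of identical small modules and then to fix its two parameters by combining explicit constructions (for the upper bounds) with a single parity obstruction (for the lower bound). Concretely, I would fix one $s$-graph $B$ together with a distinguished edge $e=vw$, chosen so that its colouring behaviour near $e$ is essentially forced, and set $O_k=\sum_{i=0}^{2k}B_i$, where each $B_i$ is a copy of $(B,e)$. This is exactly the $n=2k$ instance of the sum operation defined before Lemma~\ref{s_graph_construction}, so it joins $2k+1$ copies cyclically through the connecting edges $v_{2j}v_{2j+1}$, $w_{2j+1}w_{2j+2}$ and $v_{2k}w_0$. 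Since each $B_i$ is an $s$-graph, Lemma~\ref{s_graph_construction} immediately yields that $O_k$ is an $s$-graph; in particular, by Proposition~\ref{sgraphprop}, $O_k$ has even order and $\xi(O_k)$ is even. The selection of the module $B$ (depending on $s\ge 3$) so that an odd cyclic arrangement of $2k+1$ copies is ``frustrated'' is the creative heart of the construction.

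The two upper bounds I would obtain by direct construction. For $r(O_k)\le 2k+1$, I would exhibit an $s$-edge-colouring of $O_k$ leaving exactly $2k+1$ edges uncoloured: colour each module $B_i$ optimally, align the module colourings along the connecting edges as far as the cyclic pattern permits, and let the odd length $2k+1$ of the backbone force precisely one residual uncoloured edge associated to each module. For $\xi(O_k)\le 2k+2$, I would write down a $2$-factorization of $O_k$ obtained from $2$-factorizations of the modules that are compatible with the connecting edges, and then count that exactly $2k+2$ of the resulting cycles are odd.

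The key observation is that these two constructive bounds, together with results already available, almost close the argument. By Lemma~\ref{upper_bound} we have $\xi(O_k)\ge r(O_k)$, and since $\xi(O_k)$ is even (Proposition~\ref{sgraphprop}), once we know $r(O_k)=2k+1$ the smallest admissible value forces $\xi(O_k)\ge 2k+2$; combined with $\xi(O_k)\le 2k+2$ this gives $\xi(O_k)=2k+2=r(O_k)+1$, exactly as claimed. Thus the entire statement reduces to the single remaining inequality $r(O_k)\ge 2k+1$.

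That lower bound is where I expect the real difficulty to lie. The strategy is a global parity argument: take any $s$-edge-colouring of $O_k$ with a minimum number of uncoloured edges and apply the Parity Lemma (Lemma~\ref{Parity lemma}), both inside each module and to the set of connecting edges, so that the colour constraints transmitted across the backbone behave like a parity that must be consistent around a cycle of odd length $2k+1$. The goal is to show this consistency is impossible, so that each module is forced to carry at least one uncoloured edge that cannot be amortized against its neighbours. The main obstacle is making this frustration rigorous: one must control how a global colouring decomposes into near-colourings of the individual modules, rule out that a few connecting edges absorb several defects simultaneously, and show the parity defect of the odd cycle cannot be concentrated into fewer than $2k+1$ edges. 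I would handle this by a case analysis on the position of each uncoloured edge relative to the module boundaries, using Lemma~\ref{not1_and_basic2} to exclude the degenerate low-resistance configurations that a single module could otherwise exhibit.
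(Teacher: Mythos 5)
Your reduction of the lemma to the single inequality $r(O_k)\ge 2k+1$ is correct, and your treatment of the oddness (an explicit $2$-factorization giving $\xi(O_k)\le 2k+2$, combined with $\xi\ge r$ from Lemma~\ref{upper_bound} and the evenness of $\xi$ from Proposition~\ref{sgraphprop}) is exactly what the paper does. The gap is in the lower bound, and it sits precisely at the point you defer as ``the creative heart'': the choice of the module $B$. The paper takes $B$ to be the Meredith graph $M_s$ (the Petersen graph with $s-3$ copies of a perfect matching added), for which Meredith proved $\chi'(M_s)=s+1$; by Lemma~\ref{not1_and_basic2} this forces $r(M_s)=2$, and hence $\chi'(M_s-e)=s+1$ for every edge $e$. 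In other words, each module \emph{with its distinguished edge already deleted} still requires at least one uncoloured edge in any $s$-edge-colouring. Since $O_k$ contains $2k+1$ pairwise vertex-disjoint copies of $M_s-e$, any $s$-colouring of $O_k$ restricts to an $s$-colouring of each copy and must leave an edge of each copy uncoloured, so $r(O_k)\ge 2k+1$ follows in one line.

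Your proposed route for this lower bound --- a global parity ``frustration'' argument around the odd backbone of length $2k+1$ --- would not work, and it mislocates the source of the bound. Parity obstructions of the type provided by Lemma~\ref{Parity lemma} can only exclude certain residues (as in Lemma~\ref{not1_and_basic2}, which rules out $r=1$); they cannot force a number of uncoloured edges growing linearly in $k$. Moreover, the oddness of the number of modules plays no role in the resistance lower bound: the paper explicitly notes that the same construction with an even number $m$ of copies yields $r=m=\xi$. The number $2k+1$ is odd only so that, at the very end, the evenness of $\xi(O_k)$ forces $\xi(O_k)\ge r(O_k)+1$. Without exhibiting a concrete $s$-graph $B$ with a non-critical edge $e$ such that $B-e$ is still class~2 --- which is itself the nontrivial input, supplied by Meredith's theorem --- your outline does not establish $r(O_k)\ge 2k+1$.
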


{\bf Proof.} Let $M$ be a 1-factor of the Petersen graph $P$. The Meredith graph $M_s$ is a graph which is obtained from
$P$ by adding $s-3$ copies of $M$. It can be easily seen that the Meredith graph $M_s$ is an $s$-graph, and it is unique up to isomorphism. Meredith \cite{Meredith_1973} showed that  $\chi'(M_s) = s+1$. 

Since $\chi'(M_s) = s+1$, it follows easily from Lemma \ref{not1_and_basic2} that $r(M_s) =2$, and so also $\xi(M_s) = 2$. Also note that $\chi'(M_s) = s+1$ and $r(M_s) =2$ imply that $\chi'(M_s - e) = s+1$ for each edge $e$.

Take $2k+1$ copies $H_0 - e_0, \dots ,H_{2k} - e_{2k}$ of the Meredith graph $M_s - e$, where $e=vw$ is a simple edge and $e_i=v_iw_i$ for all $i\in [2k]$. 

Let $c_i$ be an $s$-coloring of 
$H_i - e_i$ that leaves exactly $r(H_i - e_i)=1$ edges uncolored, and such that the color $1$ is missing at $w_{0}$ and $v_{i}$ for $i=2, \dots, 2k$, color $3$ is
missing at $v_0$ and $v_1$, and color $2$ is missing at $w_i$ for $i=1, \dots 2k$. 

Such colorings of $M_s-e_i$ exist and it is possible to extend such a coloring to an $s$-coloring of $O_k = \sum_{i=0}^{2k} H_i$. Hence 
$r(O_k) \leq 2k+1$ and since $O_k$ contains $2k+1$ pairwise disjoint copies of $M_s - e$, it follows that
$r(O_k) = 2k+1$.

On the other hand it is easy to see that there is a 2-factor of $O_k$ which consists of a cycle $C$ of
length $5(2k+1)$, and $v_i,w_i \in V(C)$ for all $0 \leq i \leq 2k$, and of $2k+1$ cycles $C_0, \dots C_{2k}$ of length 5, and $C_i$ covers the 5 vertices in $H_i$ which are not in $C$. Since $\xi(O_k)$ is even and greater or equal to 
$r(O_k) = 2k+1$
the statement follows.
\hfill\ {\raisebox{0.8ex}{\framebox{}}}\par\bigskip

The construction above with an even number of copies of $M_s-e$ yields
an $s$-graph $G$ with $r(G) = k = \xi(G)$ for every even number $k$. For 2-regular graphs we have $r_2 = \xi_2$, and 
the oddness is an upper bound for the resistance of a graph by Lemma \ref{upper_bound}.
It is a natural question, whether the oddness of a graph can be bounded in terms of its resistance. 
However, the following theorem and corollaries show that resistance and oddness can be arbitrary far apart in $s$-graphs. \\

\begin{theorem} \label{no_constant_bound}For all $s\geq 3$, if there is a constant $t>1$ and an $s$-graph $H$ with $r(H)>2$ and $t(r(H)-2) \leq \xi(H) -2$, then there is no constant $c$ such that $1 \leq c < t$ and $\xi(G) \leq cr(G)$, for all $s$-graphs $G$.
\end{theorem}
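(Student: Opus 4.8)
The plan is to show that the supremum of the ratio $\xi/r$ over all $s$-graphs is at least $\frac{\xi(H)-2}{r(H)-2}$, which by hypothesis is $\ge t$. Since refuting the inequality $\xi(G)\le c\,r(G)$ only requires a single witness, it suffices to exhibit a family $(G_n)$ of $s$-graphs whose ratios increase to $\frac{\xi(H)-2}{r(H)-2}\ge t$; then for any $c<t$ some $G_n$ has $\xi(G_n)>c\,r(G_n)$. Write $\rho=r(H)$ and $\omega=\xi(H)$, so $\rho\ge 3$ and, by Lemma \ref{upper_bound} together with Proposition \ref{sgraphprop}, $\omega\ge 4$.

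I would first isolate the arithmetic heart as a lemma about the binary sum of two $s$-graphs: $\xi(A+B)\ge \xi(A)+\xi(B)-2$. The point is that the two added edges $a=v_0v_1$ and $b=w_0w_1$ form a $2$-edge-cut separating $A$ from $B$. In any $2$-factorization a cycle meets a cut in an even number of edges, and the cut has only two edges; hence $a$ and $b$ cannot lie in distinct $2$-factors nor in distinct cycles, so they lie in one common cycle $Z$ of a single $2$-factor. Starting from a minimum $2$-factorization $\mathcal F$ of $A+B$, the surgery that deletes $a,b$ and restores the removed edges $e_0=v_0w_0,\,e_1=v_1w_1$ splits $Z$ into an $A$-cycle and a $B$-cycle and turns $\mathcal F$ into a $2$-factorization $\mathcal F'$ of the disjoint union $A\sqcup B$. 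A short parity count on the lengths of the two $v$--$w$ paths making up $Z$ shows this surgery changes the number of odd cycles by $0$ or $+2$; therefore $\xi(A)+\xi(B)\le o(\mathcal F')\le o(\mathcal F)+2=\xi(A+B)+2$, which is the claim.

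With this lemma I would let $G_n$ be the iterated binary sum of $n$ copies of $H$, always gluing along edges taken from one fixed minimum removal set of $H$ (possible, since that set has $\rho\ge 3$ edges, so each interior copy can surrender two distinct ones). By Lemma \ref{s_graph_construction} every $G_n$ is an $s$-graph. Iterating the lemma gives the oddness lower bound $\xi(G_n)\ge n(\omega-2)+2$. For the matching resistance upper bound I would colour each copy so that it leaves $\rho-1$ edges uncolored at the two end copies and $\rho-2$ at each interior copy (using $r(H-e-f)=\rho-2$ for two removal-set edges $e,f$), and then colour the connecting edges by matching the colours missing at the glued endpoints, exactly as in the proof of Lemma \ref{basic_module}; this yields $r(G_n)\le n(\rho-2)+2$.

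Combining the two estimates gives
\[
\frac{\xi(G_n)}{r(G_n)}\ \ge\ \frac{n(\omega-2)+2}{n(\rho-2)+2}\ \xrightarrow[n\to\infty]{}\ \frac{\omega-2}{\rho-2}\ \ge\ t .
\]
For any constant $c$ with $1\le c<t$ the limit on the right exceeds $c$, so for $n$ large enough $\xi(G_n)>c\,r(G_n)$, contradicting $\xi(G)\le c\,r(G)$ for all $s$-graphs; hence no such $c$ exists. The main obstacle is the oddness lower bound: the resistance side is a routine colouring count, whereas controlling $\xi$ from below means controlling \emph{every} $2$-factorization, and the whole argument turns on the observation that the $2$-edge-cut produced by each gluing forces the two connecting edges into a single cycle, so a gluing can destroy at most two odd cycles. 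A secondary point to verify is that a minimum-removal-set edge is always free at the growing end of $G_n$ for the next gluing, so that the resistance genuinely accumulates at rate $\rho-2$; the case of odd $s$ additionally requires choosing the auxiliary $1$-factors compatibly across the gluings.
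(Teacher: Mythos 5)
Your proposal is correct and follows essentially the same route as the paper: a chain of copies of $H$ glued along uncolored/removal-set edges via two-edge connections, with the resistance bounded above by extending the colorings across the cuts and the oddness bounded below by the observation that the two connecting edges form a $2$-edge-cut and hence lie in a single cycle of a single $2$-factor (or both in the $1$-factor when $s$ is odd), so each gluing loses at most two odd cycles. Your superadditivity lemma $\xi(A+B)\ge\xi(A)+\xi(B)-2$ is exactly the paper's inductive step, packaged slightly more cleanly.
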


{\bf Proof.} We show that there is a sequence of graphs $G_1, G_2 \dots$, such that for any $c < t$ there is a $k$ such that $\xi(G_k) \geq c r(G_k)$. 

Let $c$ be an $s$-coloring of $H$ that leaves $r(H)$ edges uncolored. Let $G_1=H$, then $\xi(G_1) \geq t(r(G_1) - 2) +2$ by our assumption.

Let $k\geq 2$, and suppose that we have constructed $s$-graphs $G_1,...,G_{k-1}$ with $r(G_j)=jr(G_1)-2(j-1)$ and $\xi(G_j)\geq j\xi(G_1)-2(j-1)$ for all $j\in [k-1]$.

Suppose that $c_{k-1}$ and $c_1$ are $s$-colorings of $G_{k-1}$ and $G_1$ that leave $r(G_{k-1})$ and $r(G_1)$ edges uncolored, respectively. We may assume - by appropriate labeling of the colors - that there is an uncolored edge $vw$ in $G_{k-1}$ and an uncolored edge $xy$ in $G_1$ where the same colors are missing, say color 1 at $v$ and $x$ and color 2 at $w$ and $y$. 

Let $G_k$ be the graph obtained from $G_{k-1} - vw$ and $G_1 - xy$ by adding the edges $e=vx$ and $f=wy$. $G_k$ is an 
$s$-graph by Lemma \ref{s_graph_construction}, and the colorings $c_{k-1}$ and $c_1$ can be extended
to a coloring of $G_k$ with $r(G_{k-1}) + r(G_1) - 2$ uncolored edges. 

Assume that there is a coloring of $G_k$ with less than $r(G_{k-1}) + r(G_1) - 2$ uncolored edges. This implies that $G_k[V(G_1)]$ or $G_k[V(G_{k-1})]$ contains 
at most $r(G_1) - 2$  or at most $r(G_{k-1}) - 2$ uncolored edges, respectively. Thus we can obtain a coloring of $G_1$ or of $G_{k-1}$ with at most $r(G_1) - 1$ or  with at most $r(G_{k-1}) - 1$ uncolored edges, respectively, which is a contradiction. Thus
$r(G_k) = r(G_{k-1}) + r(G_1) - 2 = kr(G_1) - 2(k-1) = kr(H) - 2(k-1)$.

We will show that $\xi(G_k) \geq k\xi(G_1) - 2(k-1)$. 
Let ${\cal F}$ be a minimum 2-factorization of $G$, and recall that $\{e,f\}$ is a 2-cut in $G_k$. 

If $e$ and $f$ do not belong to any 2-factor (i.e. $s$ is odd), then, since ${\cal F}$ is a minimum 2-factorization,
$
\xi(G_k) 
\geq  \xi(G_{k-1}) + \xi(G_1) 
\geq (k-1)\xi(G_1) - 2(k-2) + \xi(G_1)
> k\xi(G_1) - 2(k-1) = k\xi(H) - 2(k-1)$.

If $e$ and $f$ are edges of a 2-factor of $G_k$, then 
there is one 2-factor $F$ of ${\cal F}$ and one cycle $C$ in $F$ with $e,f \in E(C)$.
Thus $F$ induces a 2-factor in $G_{k-1}$ and $G_1$ and it adds at most one odd cycle to the odd cycles of 
${\cal F}$ in $G[V(G_{k-1})]$ and in $G[V(G_1)]$. Furthermore, these are 2-factorization of $G_{k-1}$ and $G_1$. 
Thus $\xi(G_k) 
\geq  \xi(G_{k-1}) - 1 + \xi(G_1) - 1
\geq  (k-1)\xi(G_1) - 2(k-2) + \xi(G_1) - 2 = k\xi(G_1) - 2(k-1) = k\xi(H) - 2(k-1)$.
We restate this result as
$\xi(G_k) \geq k(\xi(H) - 2) +2$.

\vspace{.3cm}

Thus $\lim_{ k \rightarrow \infty} \frac{\xi(G_k)}{r(G_k)}  
\geq  \lim_{ k \rightarrow \infty} \frac{k(\xi(H)-2) + 2}{k(r(H) - 2) + 2} 
\geq  \lim_{ k \rightarrow \infty} \frac{kt(r(H) - 2)+2}{k(r(H) - 2) + 2}
= t$, and hence there is no constant $c < t$ such that $\xi(G) \leq cr(G)$ for all $s$-graphs $G$.
\hfill\ {\raisebox{0.8ex}{\framebox{}}}\par\bigskip 

With the same construction as in the proof above, we obtain the following corollary.

\begin{corollary}
 If there is a constant $t >  1$ such that for all $s$-graphs $G$, $\xi(G) \leq tr(G)$, then 
$\xi(K) < tr(K)$, for all class 2 $s$-graphs $K$ (i.e. the bound is not attained).
\end{corollary}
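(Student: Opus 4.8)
The plan is to argue by contradiction, reusing verbatim the doubling construction from the proof of Theorem~\ref{no_constant_bound}. By hypothesis $\xi(G) \le t\, r(G)$ holds for every $s$-graph, so to establish the \emph{strict} inequality for a class 2 $s$-graph $K$ it suffices to rule out the borderline possibility $\xi(K) = t\, r(K)$. First I would pin down $r(K)$: since $K$ is class 2, Lemma~\ref{class1case} gives $r(K) \ne 0$, and because every $s$-graph has even order (Proposition~\ref{sgraphprop}), Lemma~\ref{not1_and_basic2} gives $r(K) \ne 1$; hence $r(K) \ge 2$.

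Next I would dispose of the boundary case $r(K) = 2$ on its own. Here Lemma~\ref{not1_and_basic2} yields $\xi(K) = 2$, so the assumed equality $\xi(K) = t\, r(K)$ would read $2 = 2t$, forcing $t = 1$ and contradicting $t > 1$. Thus $\xi(K) < t\, r(K)$ whenever $r(K) = 2$, and this separate treatment is also what keeps the denominator below from vanishing.

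For the remaining case $r(K) > 2$ I would feed $H = K$ into the construction of Theorem~\ref{no_constant_bound}, obtaining a sequence $G_1 = K, G_2, \dots$ of $s$-graphs with $r(G_k) = k\, r(K) - 2(k-1)$ and $\xi(G_k) \ge k(\xi(K) - 2) + 2$. Assuming $\xi(K) = t\, r(K)$ and substituting, the ratio satisfies
\[
\lim_{k \to \infty} \frac{\xi(G_k)}{r(G_k)} \ge \frac{\xi(K) - 2}{r(K) - 2} = \frac{t\, r(K) - 2}{r(K) - 2} = t + \frac{2(t-1)}{r(K) - 2}.
\]
Since $t > 1$ and $r(K) > 2$, the last term is strictly positive, so the limit exceeds $t$. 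Consequently $\xi(G_k) > t\, r(G_k)$ for all sufficiently large $k$, contradicting the standing hypothesis that $\xi(G) \le t\, r(G)$ for every $s$-graph $G$. Hence equality is impossible in the case $r(K) > 2$ as well, and in all cases $\xi(K) < t\, r(K)$.

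The step I expect to be most delicate is not the construction itself — that is inherited wholesale from Theorem~\ref{no_constant_bound}, including the color-relabeling that matches the missing colors across the two uncolored gluing edges — but the observation that the additive $-2$ shifts in the formulas for $r(G_k)$ and $\xi(G_k)$ are exactly what push the limiting ratio $\frac{t\, r(K) - 2}{r(K) - 2}$ strictly above $t$. This is where the strictness of the conclusion is manufactured, and it hinges on $r(K) > 2$ so that the denominator is positive; handling $r(K) = 2$ beforehand is precisely what lets this argument run without edge cases.
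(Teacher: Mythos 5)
Your argument is correct, and it is built on the same gluing construction as the paper's, but you take a noticeably longer route. The paper's proof is a single doubling step: assuming $\xi(K)=tr(K)$, it joins two copies of $K-e$ (gluing at an uncolored edge of a minimal coloring, exactly as in Theorem~\ref{no_constant_bound}) to obtain an $s$-graph $H$ with $r(H)=2(r(K)-1)$ and $\xi(H)\geq 2(\xi(K)-1)$, whence $\xi(H)\geq 2tr(K)-2=tr(H)+2(t-1)>tr(H)$, contradicting the hypothesis at once. Because the additive excess $2(t-1)$ already appears after one step, the paper needs neither the passage to the limit $k\to\infty$ nor your preliminary case split: the computation above is valid even when $r(K)=2$, since no denominator $r(K)-2$ ever arises. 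Your version is sound --- the facts $r(K)\geq 2$ (via Lemma~\ref{class1case}, Proposition~\ref{sgraphprop} and Lemma~\ref{not1_and_basic2}), the disposal of $r(K)=2$, and the limit computation $\lim_k \xi(G_k)/r(G_k)\geq t+\frac{2(t-1)}{r(K)-2}>t$ are all correct, and the iterated formulas $r(G_k)=kr(K)-2(k-1)$ and $\xi(G_k)\geq k(\xi(K)-2)+2$ are indeed established in the proof of Theorem~\ref{no_constant_bound} independently of that theorem's hypothesis on $t$. What your approach buys is a slightly stronger quantitative picture (the ratio is eventually forced above $t$ along the whole sequence); what the paper's buys is brevity and uniformity, since one application of the construction suffices and no boundary case needs separate treatment.
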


{\bf Proof.} Assume there is a class 2 $s$-graph $K$ such that $\xi(K) = tr(K)$. Take two copies of $K-e$ ($e \in E(K)$ and there is an $s$-coloring of $K$ that leaves $r(K)$ edges uncolored; moreover $e$ is also uncolored) and connect them by adding two edges to obtain the $s$-graph $H$. As in the proof of Theorem \ref{no_constant_bound}, we infer
$r(H) = 2(r(K)-1)$ and $\xi(H) \geq 2(\xi(K)-1)$. Thus 
$\xi(H) \geq 2(\xi(K)-1) = 2tr(K) - 2 = tr(H) + 2t -2 > tr(H)$, since $t > 1$.
\hfill\ {\raisebox{0.8ex}{\framebox{}}}\par\bigskip

Graph $O_1$ of Lemma \ref{basic_module} has resistance 3 and oddness 4. Therefore it immediatly follows
with Theorem \ref{no_constant_bound}:

\begin{corollary} \label{no_constant}
 For all $s \geq 3$, there is no constant $c$ such that $1 \leq c < 2$ and $\xi(G) \leq cr(G)$, for all
$s$-graphs $G$.
\end{corollary}

For $s=2$, $s$-graphs are $s$-colorable, hence $r = \xi=0$. For larger values of $s$ we have by Lemma
\ref{not1_and_basic2} that $\xi(G)=2$ if and only if $r(G) =2$. However, if the resitsance is 
bigger than 3, then it is not clear whether there is a general upper bound for the oddness in terms of the resistance. We state the following problem:

\begin{problem}
For $s \geq 3$: does there exist a function $f$ such that $\xi(G) \leq f(r(G))$ for each $s$-graph $G$. 
\end{problem}

\section{Measures of edge-uncolorabilty for general graphs}

Recall that for a graph $G$, $r_v(G)$ denotes the minimum number of vertices that have to be removed from $G$ to obtain a class 1 graph, that is a graph $H$ with $\chi'(H)=\Delta(H)$, and $r'_v(G)$ is the minimum number of vertices that have to be removed from $G$ to get a graph $K$ with $\chi'(K)\leq \Delta(G)$. It is clear that $r'_v(G) \leq r_v(G)$. The following proposition shows that their difference can be arbitrary big.

\begin{proposition}For each positive integer $k$, there is a graph $G_k$ with $r_v(G_k)-r'_v(G_k)=k$.
\end{proposition}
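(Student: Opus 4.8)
The plan is to construct, for each positive integer $k$, a graph $G_k$ whose two vertex-removal parameters differ by exactly $k$. The key is to exploit the gap between the two notions: $r'_v$ only requires reducing the chromatic index to at most $\Delta(G)$ (the \emph{original} maximum degree), whereas $r_v$ requires reaching a class 1 graph, i.e.\ $\chi'(H)=\Delta(H)$ for the \emph{resulting} graph. These can differ dramatically because deleting a high-degree vertex can lower $\Delta$, so that a graph can satisfy $\chi'(H)\le\Delta(G)$ while still being class 2 relative to its own (smaller) maximum degree.

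The construction I would use is to attach a single high-degree ``hub'' vertex $u$ to a cluster of $k$ disjoint small class 2 gadgets. Concretely, let each gadget be a copy of a class 2 graph of small, fixed maximum degree $\delta$ (for instance a triangle, which is class 2 with $\Delta=2$), and let $u$ be joined to all of them so that $\deg(u)=\Delta(G_k)$ is large, larger than the maximum degree of everything else. First I would arrange the gadgets and the edges from $u$ so that a single deletion — namely deleting $u$ — already drops the global maximum degree below $\Delta(G_k)$, making the leftover graph automatically satisfy $\chi'\le\Delta(G_k)$; this forces $r'_v(G_k)=1$. Then I would verify that after deleting $u$ the remaining graph is a disjoint union of $k$ class 2 gadgets, so to make it class 1 one must also destroy the class 2 obstruction in each of the $k$ gadgets, requiring at least one further vertex deletion per gadget. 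This would give $r_v(G_k)\ge 1+k$, and by exhibiting an explicit set of $1+k$ deletions achieving class 1 one gets $r_v(G_k)=1+k$, hence $r_v(G_k)-r'_v(G_k)=k$.

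The steps in order are: (i) define $G_k$ explicitly, fixing the gadget, the hub $u$, and the joining edges so that $\Delta(G_k)=\deg(u)$ is strictly larger than every other degree and than the degrees appearing in any gadget; (ii) prove $r'_v(G_k)=1$ by showing deleting $u$ leaves $\Delta<\Delta(G_k)$ and $\chi'\le\Delta(G_k)$, and that no zero-deletion works because $G_k$ itself is class 2; (iii) prove the lower bound $r_v(G_k)\ge 1+k$ by arguing any successful deletion set must contain a vertex from each of the $k$ gadgets (otherwise an untouched gadget survives as a class 2 component obstructing $\chi'=\Delta$); and (iv) exhibit a deletion set of size $1+k$ attaining a class 1 graph.

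The main obstacle I anticipate is the lower-bound argument in step (iii): one must rule out the possibility that a clever small deletion set simultaneously resolves several gadgets at once, or that deleting $u$ interacts with the gadgets in a way that makes fewer deletions suffice. The cleanest way to control this is to ensure the gadgets are \emph{vertex-disjoint} and attached to $u$ by edges only (sharing no vertices with each other), so that removing a vertex affects at most one gadget's internal structure, and to choose gadgets that remain class 2 unless one of their own vertices is deleted. I would also need to be careful that the hub's degree genuinely forces $r'_v$ down to $1$ rather than $0$ — i.e.\ that $G_k$ is class 2 to begin with — which follows if even one gadget makes the whole graph class 2; choosing $\deg(u)=\Delta(G_k)$ and verifying by a parity or counting argument (in the spirit of Lemma~\ref{Parity lemma}) that $G_k$ cannot be $\Delta(G_k)$-edge-colored handles this.
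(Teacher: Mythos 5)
Your construction has a genuine gap, and in fact two independent ones. First, the graph you describe --- simple gadgets of small maximum degree attached by edges to a hub $u$ that is the unique vertex of degree $\Delta(G_k)$ --- is class 1, not class 2. By the standard consequence of Vizing's Adjacency Lemma, every critical class 2 simple graph has at least three vertices of maximum degree (equivalently, by Fournier's theorem, a graph whose major vertices induce a forest is class 1), so a simple graph with a single major vertex cannot be class 2. Hence $r_v(G_k)=r'_v(G_k)=0$ and the difference is $0$, not $k$; no parity or counting argument in the spirit of Lemma \ref{Parity lemma} can rescue this, because the claim that your $G_k$ is not $\Delta(G_k)$-edge-colorable is simply false. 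Second, even if $G_k$ were class 2, your lower bound in step (iii) fails: an untouched gadget $T$ with, say, $\chi'(T)=3$ and $\Delta(T)=2$ does not prevent the remaining graph $H$ from satisfying $\chi'(H)=\Delta(H)$ once $\Delta(H)\geq 3$. A class 2 subgraph obstructs $H$ from being class 1 only if its chromatic index exceeds $\Delta(H)$, and since you insist that $\deg(u)$ strictly exceed every degree inside the gadgets, a deletion set that spares $u$ leaves $\Delta(H)$ large and the gadgets harmless.

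The paper's construction is engineered precisely around these two obstacles and uses multigraphs. It chains together $k+1$ multi-triangles: $H_0$ with every edge doubled (six pairwise adjacent edges, so $\chi'(H_0)=6$) and $H_1,\dots,H_k$ each with two doubled edges and one simple edge (five pairwise adjacent edges, so $\chi'(H_i)=5$ while $\Delta(H_i)=4$), joined by single edges $b_{i-1}a_i$ so that $\Delta(G_k)=5$, attained only at $b_0\in V(H_0)$. Then $G_k$ is class 2 because of $H_0$, and one vertex deletion already yields $\chi'\leq 5=\Delta(G_k)$, so $r'_v(G_k)=1$; but any vertex set whose removal leaves some $H_i$ ($i\geq 1$) intact leaves a graph $H$ with $\chi'(H)\geq 5$ that is class 2 in either case: if $\Delta(H)=5$ then $H$ must contain all of $H_0$ and so $\chi'(H)\geq 6>\Delta(H)$, and if $\Delta(H)\leq 4$ then $\chi'(H)\geq 5>\Delta(H)$. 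This is what forces $r_v(G_k)=k+1$. If you want to keep your hub-plus-gadgets picture, the gadgets must have chromatic index strictly larger than the maximum degree of every graph that can survive the deletion --- which is incompatible with a hub of strictly larger degree; the paper resolves the tension by letting the ``hub'' be a vertex of the worst gadget.
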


\begin{proof}For $i=0,1,...,k$ let $H_i$ be a multi-triangle with vertices $a_i, b_i, v_i$. Let $H_0$ be the triangle where every edge has multiplicity 2. 
For $1 \leq i \leq k$, let $H_i$ be the triangle with two edges of multiplicity 2, and one simple edge, say $a_ib_i$. Consider the graph $G_k$ obtained from 
$H_0$, $H_1$,...,$H_k$ by adding the edges $b_0a_1, b_1a_2,\dots, b_{k-1}a_k$. It can be easily verified that $\Delta(G_k)=5$, $r(G_k)=1$, $r'_v(G_k)=1$ and $r_v(G_k)=k+1$.
$\square$
\end{proof}

\begin{theorem}For every graph $G$ the following inequalities hold: $\frac{r(G)}{r_v(G)}\leq \left \lfloor \frac{\Delta(G)}{2}\right \rfloor$ and 
$\frac{r(G)}{r'_v(G)}\leq \left \lfloor \frac{\Delta(G)}{2}\right \rfloor$. Furthermore, the bounds are best possible.
\end{theorem}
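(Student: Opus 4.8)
The plan is to prove the single inequality $r(G) \le \lfloor \Delta(G)/2\rfloor \cdot r'_v(G)$; the bound for $r_v$ then comes for free, since $r'_v(G) \le r_v(G)$ gives $\frac{r(G)}{r_v(G)} \le \frac{r(G)}{r'_v(G)}$. Write $\Delta = \Delta(G)$ and assume $G$ is class 2 (otherwise $r(G) = r'_v(G) = 0$ and there is nothing to prove). Let $S$ be a set of $r'_v(G)$ vertices with $\chi'(G - S) \le \Delta$, enumerate $S = \{v_1, \dots, v_m\}$, and set $G_0 = G - S$ and $G_i = G_{i-1} + v_i$, where $v_i$ is added together with its edges to the vertices already present. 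Then $G_m = G$, and every edge joining two vertices of $S$ is introduced exactly once, namely at the insertion of its later endpoint. Hence it suffices to prove a relative bound stating that each insertion raises the resistance by at most $\lfloor \Delta/2\rfloor$, and then to telescope.

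\emph{Main step (per-vertex bound).} I would prove: if $H$ is a graph with $\Delta(H) \le \Delta$ and $v \in V(H)$, then $r_\Delta(H) \le r_\Delta(H - v) + \lfloor \Delta/2\rfloor$, where $r_\Delta$ denotes the minimum number of edges one must delete to obtain a $\Delta$-edge-colorable graph (so $r_\Delta(G) = r(G)$). Starting from a maximum $\Delta$-edge-colorable subgraph of $H - v$ with a proper $\Delta$-coloring $\phi$, I reinsert $v$; since $\deg_H(v) \le \Delta$ and each neighbor of $v$ carries at most $\Delta - 1$ colored edges, every edge at $v$ has at least one admissible color at its far endpoint. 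I would then color the edges at $v$, and whenever an edge $vu$ cannot be colored directly I would attempt to free a color at $u$ by a Kempe ($\alpha\beta$-chain) exchange in the already-colored part; the edges at $v$ that still resist are deleted. Applied along $G_0, G_1, \dots, G_m$ with color budget $\Delta$ throughout, this yields $r(G) \le 0 + m\lfloor \Delta/2\rfloor = \lfloor \Delta/2\rfloor \cdot r'_v(G)$.

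I expect the genuinely hard part to be establishing that at most $\lfloor \Delta/2\rfloor$ edges at the inserted vertex remain uncolored. The naive greedy/Vizing-fan argument only guarantees one successfully colored edge when every neighbor shares its unique free color, which would leave up to $\Delta - 1$ edges uncolored. The example $K_{2k+1}$ shows this is not an artifact: inserting the last vertex into any $\Delta$-coloring of $K_{2k}$ really does force $k = \lfloor \Delta/2\rfloor$ uncolored edges. The mechanism that saves a factor of two is parity: by Lemma \ref{Parity lemma}, for each color the set of vertices missing it has fixed parity, so the color collisions obstructing a matching between the edges at $v$ and the colors occur in pairs, and each pair costs only one uncolored edge. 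Turning this parity/defect-Hall count into a bound of $\lfloor \Delta/2\rfloor$ that holds uniformly over the partial colorings inherited from earlier insertion steps is where the real work lies.

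Finally, for best-possibleness I would exhibit equality. For even $\Delta = 2k$ take $G = K_{2k+1}$: deleting any vertex yields $K_{2k}$, which is class 1, so $r_v(G) = r'_v(G) = 1$, while the corollary on complete graphs of odd order gives $r(G) = k = \lfloor \Delta/2\rfloor$; hence both ratios attain $\lfloor \Delta/2\rfloor$. For odd $\Delta = 2k+1$ I would use an overfull multigraph on $2k+1$ vertices, namely $K_{2k+1}$ together with $k$ additional parallel edges placed along a near-perfect matching: it has maximum degree $2k+1$ and is overfull, so $\chi'(G) > \Delta$ and $r'_v(G) \ge 1$; deleting the unmatched vertex leaves a $2k$-regular class-1 multigraph, so $r_v(G) = r'_v(G) = 1$; and the odd-set count $|E(G[V])| - \Delta\lfloor |V|/2\rfloor = k$ forces $r(G) = k = \lfloor \Delta/2\rfloor$, again giving equality in both inequalities.
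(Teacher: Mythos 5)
Your overall architecture is exactly the paper's: reduce to the $r'_v$ inequality via $r'_v(G)\leq r_v(G)$, reinsert the deleted vertices one at a time, and prove a per-vertex increment of at most $\lfloor \Delta(G)/2\rfloor$ in the quantity ``minimum number of edges to delete to become $\Delta(G)$-edge-colorable.'' Your tightness examples are also sound: for even $\Delta$ you use the same $K_{2k+1}$ as the paper, and for odd $\Delta=2k+1$ your overfull multigraph ($K_{2k+1}$ plus a doubled near-perfect matching) is a correct witness --- deleting the doubled edges recovers $K_{2k+1}$, which is $(2k+1)$-edge-colorable, so $r=k$, while deleting the unmatched vertex leaves a $2k$-regular class 1 graph, so $r_v=r'_v=1$. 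This is arguably cleaner than the paper's Case 3, which builds a simple graph from subdivided copies of $K_{2k+1,2k+1}$; the paper also treats $\Delta=3$ separately via snarks.

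However, the heart of the proof --- that at most $\lfloor d(v_i)/2\rfloor$ edges at the reinserted vertex must remain uncolored --- is precisely the step you leave open, and the parity heuristic you gesture at is not the mechanism that closes it. The paper's argument is local, not a global parity count. One exhaustively applies three recoloring rules: (1) color $v_iu$ directly if $v_i$ and $u$ miss a common color; (2) if $\alpha$ is missing at $v_i$, $\beta$ at $u$, and the $\alpha$--$\beta$ alternating path from $u$ does not return to $v_i$, swap along it and color $v_iu$; (3) if two uncolored edges $v_iu$ and $v_iw$ have a common missing color $\alpha$ at $u$ and $w$, their alternating chains both terminate in the unique $\alpha$-colored edge at $v_i$, and a further exchange reduces the defect. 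The key structural consequence when no rule applies is that every remaining uncolored edge $v_iu$ closes into an odd alternating cycle through $v_i$ consuming exactly one \emph{colored} edge incident to $v_i$, that the missing-color sets at distinct neighbors $u_q$, $u_r$ are disjoint (else Rule 3 would apply), and hence that these consumed colored edges are pairwise distinct. This yields an injection from the uncolored edges at $v_i$ into the colored edges at $v_i$, whence at most $\lfloor d_{G_i}(v_i)/2\rfloor$ uncolored edges. Without this injection, your greedy-plus-Kempe sketch guarantees only that at least one edge at $v_i$ gets colored, i.e.\ the trivial bound $d(v_i)-1$; your appeal to Lemma \ref{Parity lemma} does not obviously control the partial colorings arising mid-algorithm, since the graph at that stage has uncolored edges both at $v_i$ and elsewhere. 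So the proposal identifies the right statement to prove but does not prove it.
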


\begin{proof}Since $r'_v(G) \leq r_v(G)$, it suffices to prove only the second inequality.

For the proof of this inequality, let us assume that $r'_v(G)=k$ and let $V'=\{v_1,...,v_k\}$ be a set of vertices, such that $\chi'(G-V')\leq \Delta(G)$. For $i=0,...,k$ define the graph $G_i$ as $G_i=G-\{v_{i+1},...,v_k\}$. Note that $G_0=G-V'$ and $G_k=G$.

Also, for a graph $H$ define $r^{\Delta(G)}_{e}(H)$ as the minimum number of edges that should be removed from $H$ to get a $\Delta(G)$-edge-colorable subgraph. Observe that $r^{\Delta(G)}_{e}(G_0)=0$ and $r^{\Delta(G)}_{e}(G)=r(G)$.

Note that since $r^{\Delta(G)}_{e}(G_0)=0$, it suffices to show that for $i=1,...,k$
\begin{equation*}
r^{\Delta(G)}_{e}(G_i)\leq r^{\Delta(G)}_{e}(G_{i-1})+\left \lfloor \frac{\Delta(G_i)}{2}\right \rfloor.
\end{equation*}

Let $\theta_{i-1}$ be a $\Delta(G)$-edge-coloring of $G_{i-1}$ that leaves $r^{\Delta(G)}_{e}(G_{i-1})$ edges uncolored. Assume that $\theta_{i-1}$ uses the colors $\{1,...,\Delta(G)\}$. We will describe an algorithm that from $\theta_{i-1}$ constructs a $\Delta(G)$-edge-coloring $\theta_{i}$ of $G_{i}$ that leaves at most $r^{\Delta(G)}_{e}(G_{i-1})+\left \lfloor \frac{\Delta(G_i)}{2}\right \rfloor$ edges uncolored. Clearly, this will prove the required inequality.

Let $e_1,...,e_l$ be the new edges of $G_i$, i.e. the edges that are incident to $v_i$ in $G_i$. Consider $\theta_{i-1}$ as a $\Delta(G)$-edge-coloring of $G_{i}$, in which the edges $e_1,...,e_l$ are left uncolored and $\bar{C}(v_i)=\{1,...,\Delta(G)\}$, where, for a vertex $v$, $\bar{C}(v)$ denotes the set of colors missing at $v$.

Repeatedly, apply the following rules until none of them is applicable.\\

Rule 1:if there is an uncolored edge $v_iu$ and a color $\alpha \in \bar{C}(v_i)\cap \bar{C}(u)$, then color the edge $v_iu$ by $\alpha $;

Rule 2: if there is an uncolored edge $v_iu$, $\alpha \in \bar{C}(v_i)$, $\beta \in \bar{C}(u)$, such that the $\alpha-\beta$ alternating path starting from vertex $u$ does not come back to $v_i$, then exchange the colors $\alpha$ and $\beta$ on this path, color the edge $v_iu$ by $\alpha $;

Rule 3: if there are uncolored edges $v_iu$, $v_iw$ ($u\neq w$), $\alpha \in \bar{C}(u)\cap \bar{C}(w)$, $\beta \in \bar{C}(v_i)$, $\gamma \in \bar{C}(v_i)$, such that the $\alpha-\beta$-alternating path starting from vertex $u$ and $\alpha-\gamma$-alternating path starting from vertex $w$ share an edge $v_iv$ ($v\neq u,w$) colored by $\alpha$, then color the edge $v_iw$ by $\alpha$, clear the color of the edge $v_iv$ and exchange the colors on the $\alpha-\beta$-alternating path starting from vertex $u$.\\

Let $\theta_{i}$ be the resulting edge-coloring of $G_i$. We claim that there are at most $r^{\Delta(G)}_{e}(G_{i-1})+\left \lfloor \frac{\Delta(G_i)}{2}\right \rfloor$ uncolored edges in $G_i$.

Note that none of the rules 1,2 and 3 changes the number of uncolored edges that belong to $G_{i-1}$. Thus, it suffices to show that there are at most $\left \lfloor \frac{\Delta(G_i)}{2}\right \rfloor$ uncolored edges adjacent to $v_i$ in $G_i$.

Let $u_1,...,u_p$ be the vertices of $G_{i-1}$ that are connected to $v_i$ by an uncolored edge of $\theta_{i}$, and let $e^{(1)}_1,...,e^{(1)}_{k_1}$ be the uncolored edges that connect vertices $v_i$ and $u_1$, $e^{(2)}_{k_1+1},...,e^{(2)}_{k_1+k_2}$ be the ones that connect vertices $v_i$ and $u_2$,..., $e^{(p)}_{k_1+...+k_{p-1}+1},...,e^{(p)}_{k_1+...+k_p}$ be the ones that connect vertices $v_i$ and $u_p$.

Let $\bar{C}(u_j)=\{\beta^{(j)}_{1},...,\beta^{(j)}_{t_j}\}$ for $j=1,...,p$, and also let $\bar{C}(v_i)=\{\alpha_1,...,\alpha_s\}$. Note that $t_j\geq k_j$ and $s\geq k_1+...+k_p$. Moreover, since the rules 1,2 and 3 are not applicable, we have:
\begin{itemize}
	\item for any $\alpha \in \bar{C}(v_i)$ and $\beta \in \bar{C}(u_j)$, the $\alpha-\beta$ alternating path starting from vertex $u_j$ ends at vertex $v_i$, which means that any uncolored edge $e$ connecting vertices $v_i$ and $u_j$ lies in an odd cycle $C^{e}_{\alpha, \beta}$;
	\item $\bar{C}(u_q)\cap \bar{C}(u_r)=\emptyset$ $1\leq q<r\leq p$.
\end{itemize}
Consider the cycles:

\begin{eqnarray*}
C^{e^{(1)}_{1}}_{\alpha_1,\beta^{(1)}_{1}},...,C^{e^{(1)}_{k_1}}_{\alpha_{k_1},\beta^{(1)}_{k_1}},\\
C^{e^{(2)}_{k_1+1}}_{\alpha_{k_1+1},\beta^{(2)}_{1}},...,C^{e^{(2)}_{k_1+k_2}}_{\alpha_{k_1+k_2},\beta^{(2)}_{k_2}},\\
\dots\\
C^{e^{(p)}_{k_1+...+k_{p-1}+1}}_{\alpha_{k_1+...+k_{p-1}+1},\beta^{(p)}_{1}},...,C^{e^{(p)}_{k_1+...+k_{p}}}_{\alpha_{k_1+...+k_p},\beta^{(p)}_{k_p}}.\end{eqnarray*}

These cycles are edge-disjoint in the neighborhood of $v_i$, thus with any uncolored edge adjacent to $v_i$ there is a unique colored edge adjacent to it. This shows that $v_i$ is adjacent to at most $\left \lfloor \frac{d_{G_i}(v_i)}{2} \right \rfloor \leq \left \lfloor \frac{\Delta(G_i)}{2}\right \rfloor$ uncolored edges and completes the proof of the second inequality. \\

It remains to show that the bounds are best possible. We show  that
for each $\Delta \geq 2$ there is a $\Delta$-regular graph $G_{\Delta}$ with $\frac{r(G_{\Delta})}{r_v(G_{\Delta})}=\frac{r(G_{\Delta})}{r'_v(G_{\Delta})}=\left \lfloor \frac{\Delta(G)}{2}\right \rfloor$.

Case 1: $\Delta=2k$. Let $G_{\Delta}$ be the complete graph $K_{2k+1}$. Observe that $\Delta(K_{2k+1})=2k$, $r(K_{2k+1})=k$ and $r_v(K_{2k+1})=r'_v(K_{2k+1})=1$. Thus: $r(G_{\Delta})=r_v(G_{\Delta})\left \lfloor \frac{\Delta(G_{\Delta})}{2} \right \rfloor=r'_v(G_{\Delta})\left \lfloor \frac{\Delta(G_{\Delta})}{2} \right \rfloor$.

Case 2: $\Delta=3$. Let $G_{\Delta}$ be any cubic graph that is not $3$-edge-colorable. Then, as it is shown in \cite{Steffen_1998}, $r(G_{\Delta})=r_v(G_{\Delta})=r'_v(G_{\Delta})$.

Case 3: $\Delta=2k+1\geq 5$. Let $H$ be the graph obtained from $K_{2k+1,2k+1}$ by subdividing one of its edges. Take $k$ copies of the graph $H$ and identify the vertices of degree two to get the graph $G$. Now, take two copies of $G$ and connect the vertices of degree $2k$ by an edge to get the graph $G_{\Delta}$. Then
$\Delta(G_{\Delta})=2k+1$, $r(G_{\Delta})=2k$ and $r_v(G_{\Delta})=r'_v(G_{\Delta})=2$. Thus: $r(G_{\Delta})=r_v(G_{\Delta})\cdot \left \lfloor \frac{\Delta(G_{\Delta})}{2} \right \rfloor=r'_v(G_{\Delta})\cdot \left \lfloor \frac{\Delta(G_{\Delta})}{2}\right \rfloor$.
\hfill $\square$
\end{proof}

\end{document}